\documentclass[10pt, a4paper]{article}
\usepackage{xcolor}
\usepackage[a4paper, margin=1in]{geometry}
\newenvironment{myproof}[1][Proof]{\noindent\textit{#1:} }{\hfill$\square$}
\newtheorem{thm}{Theorem}
\newtheorem{rem}{Remark}
\usepackage{algorithm}
\usepackage{cite}
\usepackage{amsmath,amssymb,amsfonts}
\usepackage{algorithmic}
\usepackage{graphicx}
\usepackage{textcomp}

\title{A Necessary and Sufficient Condition for Local Synchronization in Nonlinear Oscillator Networks}
\author{Sanjeev Kumar Pandey, Shaunak Sen, Indra Narayan Kar}
\date{}

\begin{document}

\maketitle

\begin{abstract}
Determining conditions on the coupling strength for the synchronization in networks of interconnected oscillators is a challenging problem in nonlinear dynamics.
While sophisticated mathematical methods have been used to derive conditions, these conditions are usually only sufficient and/ or based on numerical methods. We addressed the gap between the sufficient coupling strength and numerical observations using the Lyapunov-Floquet Theory and the Master Stability Function (MSF) framework. We showed that a positive coupling strength is a necessary and sufficient condition for local synchronization in a network of identical oscillators coupled linearly and in a full-state fashion. For partial state coupling, we showed that a positive coupling constant results in an asymptotic contraction of the trajectories in the state space, which results in synchronisation for two-dimensional oscillators. We extended the results to networks with non-identical coupling over directed graphs and showed that positive coupling constants are sufficient condition for synchronisation. These theoretical results are validated using numerical simulations and experimental implementations. Our results contribute to bridging the gap between the theoretically derived sufficient coupling strengths and the numerically observed ones.
\end{abstract}


\section{Introduction}
\label{sec:introduction}
Synchronization of oscillators is a fundamental phenomenon observed in a wide range of natural and engineered systems \cite{strogatz2004sync,pikovsky2003synchronization,strogatz1996nonlinear,izhikevich2007dynamical}. Representative examples include synchronization in fireflies \cite{ermentrout1991adaptive}, cardiac pacemakers \cite{nunez2016synchronization}, neuronal networks \cite{brody2003simple,kopell2000we,singer1999neuronal}, electronic oscillators \cite{carroll1993synchronizing,tang1983synchronization,wu1995synchronization,liu2019synchronization,buldu2007electronic}, electric power systems \cite{ajala2021robust}, and coordinated robotic systems \cite{sinha2023coupled}. A central challenge in the analysis and design of such systems is to characterize the coupling conditions under which synchronization emerges.

A major advance in the study of local synchronization was the Master Stability Function framework introduced in \cite{pecora1998master}. By linearizing the dynamics about the synchronous solution and analyzing the associated Floquet exponents, the MSF provides a powerful tool for assessing the local stability of synchronization. This framework has been widely applied to various oscillator networks \cite{boccaletti2006complex, braga2024selecting}. However, the MSF approach typically requires numerical computation of Lyapunov or Floquet exponents, and the resulting synchronization conditions are often implicit or computational in nature \cite{shafi2013synchronization}.

Several analytical approaches have been developed to obtain verifiable synchronization conditions, including Lyapunov-based methods, contraction theory, and passivity-based analysis \cite{arcak2011certifying,russo2010global,scardovi2008synchronization,barahona2002synchronization,jadbabaie2004stability,joshi2022synchronization}. While these methods yield rigorous guarantees, they generally provide only sufficient conditions and lead to conservative coupling thresholds. For instance, in coupled Van der Pol oscillators, existing analyses predict synchronization only when the coupling gain exceeds a strictly positive critical value that depends on system parameters and network topology \cite{russo2010global,joshi2022synchronization}.

In contrast, numerical simulations and experimental observations consistently indicate that synchronization can occur for arbitrarily small positive coupling gains. This discrepancy reveals a gap between analytically derived sufficient conditions, typically of the form $K > K^* > 0$, and observed behavior, where synchronization is achieved for $K > 0$ (Fig.~\ref{remark_Kgreater_zero}).

\begin{figure}[h!]
\centering
\includegraphics[width=0.64\textwidth]{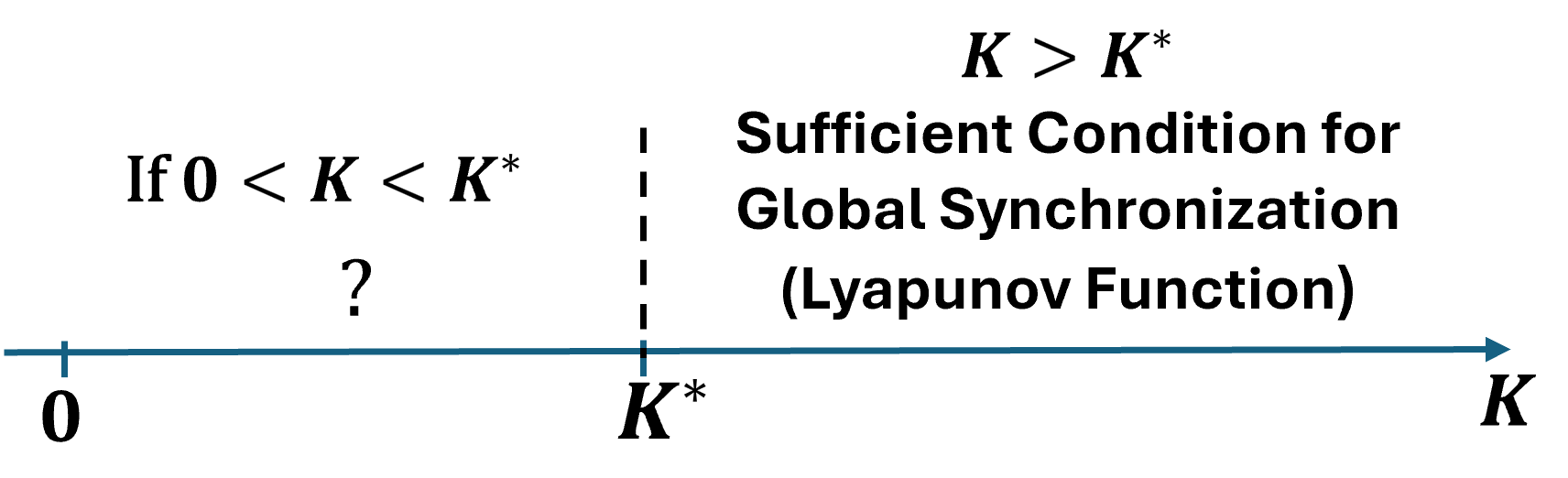}
\caption{Illustration of the gap between analytically derived sufficient conditions ($K>K^*>0$) and numerical observations, which suggest synchronization for $K>0$.}
\label{remark_Kgreater_zero}
\end{figure}

Related Floquet-based analyses have been successfully employed to study synchronization in diffusively coupled and spatially distributed systems, including reaction--diffusion PDEs and compartmental ODE models \cite{shafi2013synchronization,arcak2011certifying}. These works focus on estimating diffusion or coupling strength thresholds that guarantee stability of synchronous limit cycles in infinite- or high-dimensional settings. In contrast, the present work considers finite-dimensional networks of nonlinear oscillators with directed and non-identical coupling gains. In this setting, the coupling matrix is generally non-symmetric and therefore not diagonalizable by an orthogonal transformation. As a result, the associated coupling eigenmodes are not orthogonal, which prevents the standard modal decoupling used in classical MSF analysis and necessitates a different analytical treatment.

Motivated by this gap, we revisit the synchronization problem using Floquet theory, with particular emphasis on the Lyapunov-Floquet transformation. Our objective is not to introduce new Floquet tools, but to exploit their structure to derive tighter analytical conditions on coupling gains. We show that, for identical oscillators coupled linearly in a full-state fashion, a positive coupling constant ($K>0$) is both necessary and sufficient for local synchronization. For partial-state coupling, we establish asymptotic contraction of the state-space volume, which guarantees synchronization for two-dimensional oscillators. We further extend the analysis to networks with non-identical coupling over directed graphs and derive sufficient synchronization conditions in this more general setting. Numerical simulations and experimental results on coupled electronic oscillators validate the theoretical findings.

\section{Mathematical Background}
This section provides a brief overview of the relevant mathematical theory that is used in this study. 
Consider a nonlinear dynamical system
\small{
\begin{equation}
\dot{x}=f\left(x\right), \text{ } x \in \mathbb{R}^{m},
\end{equation}}
\normalsize
where $f(x)$ is a smooth nonlinear function, and the system exhibits a stable limit cycle $x_s(t)$ with time period $T$. Linearizing this system around the limit cycle $x_s(t)$ yields the linear time-periodic system
\small{\begin{equation}\label{linearize}
	\dot{y}=A(t) y, \text{ }y \in \mathbb{R}^{m},
\end{equation}}
\normalsize
where $A(t)=D f\left(x_s(t)\right)$ is the Jacobian of $f$ evaluated along the limit cycle $x_s(t)$. 
The state transition matrix $\phi\left(t, t_0\right)$ describes the evolution of the system from an initial time $t_0$ to a later time $t$. 
For the linearized system (\ref{linearize}), the state transition matrix satisfies
\small{$$
\frac{\partial \phi\left(t, t_0\right)}{\partial t}=A(t) \phi\left(t, t_0\right), \quad \phi\left(t_0, t_0\right)=I
$$}
\normalsize
where $I$ is the identity matrix. 
The fundamental matrix $\phi(T, 0)$, which maps the state at time $t=0$ to $t=T$, plays a central role in Floquet theory.

\subsection{Floquet Theory}\label{A}
Floquet theory was developed to describe the behavior of a system of linear differential equations with time-periodic coefficients \cite{iakubovich_linear_1975}. 
The theory introduces the concept of the Floquet multipliers ($\mu$), which are eigenvalue-like entities that describe the exponential growth or decay of solutions over one time period of the system. 
The fundamental solutions denote the state transition by $\phi(t,t_{0})$. Define $R$ by $e^{RT}=\phi(T,0)$ and $P^{-1}(t)=\phi(t,0)e^{-RT}$. 
Then $P^{-1}(t)$ is periodic with period $T$ as $P^{-1}(t+T)=\phi(t+T,0)e^{ -R(t+T)}=\phi(t+T,T)\phi(T,0)e^{-RT}e^{-Rt}=\phi(t,0)e^{-Rt}=P^{-1}(t)$. 
In the co-ordinates $z=P(t)y$, the system is the linear time invariant system $\dot z=Rz$ as $\dot P(t)P^{-1}(t)+P(t)A(t)P^{-1}(t)=R.$ 
The eigenvalues of $R$, called the Floquet exponents, determine the stability. 
Floquet multipliers are the eigenvalues of $e^{RT}$. 
For asymptotic stability, one of the Floquet multiplier has absolute value as $1$, representing perturbations along the limit cycle, and others have moduli strictly less than $1$.

\subsection{Master Stability Function}\label{B} 
Consider $n$ identical oscillators
\begin{equation} \label{uncoupled}
	\dot{x}_{i}=f\left(x_{i}\right), \text{ } x_{i} \in \mathbb{R}^{m}, \text{ } i \in \{1,2,...,n\},
\end{equation}
diffusively and identically coupled in a fully connected network \cite{pecora1990synchronization,boccaletti2006complex}
\begin{equation} \label{eq:(1)}
	\dot{x}_{i}=f\left(x_{i}\right)+K \sum_{j\in {N_{i}}}G_{ij} H\left(x_{j}\right),
\end{equation}
where $K$ is the coupling constant, $N_{i}$ is the neighborhood of the oscillator $i$, $G$ is the graph Laplacian and $H\left(x_j\right)$ is a coupling function that determines how variables of oscillator $j$ influence oscillator $i$. 
For linear coupling, the coupling between an oscillator $i$ and an oscillator $j \in N_{i}$ is $K(x_{j}-x_{i})$. 
Linearizing the system around the synchronous state  ($x_{s}$) results in the variational equation
\begin{equation} \label{var_equation}
	\dot {Y}=\left[I_{n} \otimes Df\left(x_s\right) - KG \otimes DH\left(x_s\right)\right] Y,
\end{equation}
where $Y=\left[y_1^T, y_2^T, \ldots, y_n^T\right]^T$ is a $nm$-dimensional state vector, and $\otimes$ is the Kronecker product. $D H\left(x_s\right)$ is the Jacobian of $H$ evaluated at $x_s(t)$. 
Equation (\ref{var_equation}) can be written in block matrix form by introducing a transformation $\zeta=(Q \otimes I_{n})Y$, where $Q$ is the transformation matrix. On diagonalising $G$ through the transformation $Q$, where $Q^{-1}GQ= \text{diag} \left[0, \lambda_{2}, \lambda_{3},..., \lambda_{n}\right]$, $0< \lambda_{2} \leq \lambda_{3}\leq,...,\leq \lambda_{n}$. 
In transformed coordinates, the dynamics are in block form, with each block representing an eigenmode
\begin{equation}\label{gamma_eq}
	\dot{\zeta_{i}}=[DF\left(x_s\right)-K\lambda_{i} DH\left(x_s\right)] \zeta_{i}, \text{ } i = 1,2,...,n.
\end{equation}
$\zeta_{i}$ is referred as the synchronization mode. 
The dynamics corresponding to the first eigenmode is $\dot{\zeta}_{1}=DF(x_S) \zeta_{1}$, which is the same as the linearised dynamics around the limit cycle. 
Assuming this mode to be stable, the stability of the synchronized state is determined by the Floquet multipliers of all other eigenmodes. 
The Master Stability Function ($\mu_{max}(\lambda)$) is the largest non-unity Floquet multiplier of the system matrix $[ DF\left(x_s\right)-K\lambda DH\left(x_s\right)]$. 
The necessary and sufficient condition for the local stability is $\mu_{max}(\lambda)<1$.

\section{CONDITIONS FOR SYNCHRONIZATION}
This section investigates the conditions for synchronization of identical oscillators coupled identically and linearly. Synchronization of two or more identical oscillators is defined as the situation when the difference between their corresponding states is asymptotically zero. 

\begin{thm}\label{Theorem1}
A network of identical oscillators (\ref{eq:(1)}) coupled identically and linearly in full-state fashion synchronizes if and only if $K>0$.
\end{thm}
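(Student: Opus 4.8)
The plan is to exploit the block-diagonalized variational equation (\ref{gamma_eq}) together with the special structure of full-state linear coupling. First I would observe that linear full-state coupling means $H(x)=x$, hence $DH(x_s)=I_m$, so the transverse eigenmodes $i=2,\dots,n$ in (\ref{gamma_eq}) satisfy $\dot\zeta_i=\bigl(A(t)-K\lambda_i I_m\bigr)\zeta_i$, where $A(t)=Df(x_s(t))$ is the $T$-periodic Jacobian of (\ref{linearize}) and $0<\lambda_2\le\cdots\le\lambda_n$ are the nonzero Laplacian eigenvalues, which are strictly positive because the network is connected (so the zero eigenvalue, associated with the first mode $\dot\zeta_1=Df(x_s)\zeta_1$, is simple and corresponds exactly to motion within the synchronization manifold). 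By the definition of synchronization, the network synchronizes if and only if every transverse mode $\zeta_i$, $i=2,\dots,n$, decays to zero as $t\to\infty$.

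Next I would eliminate the constant shift via the substitution $\zeta_i(t)=e^{-K\lambda_i t}\eta_i(t)$, which reduces the mode equation to $\dot\eta_i=A(t)\eta_i$, i.e.\ precisely the linearization (\ref{linearize}) about the limit cycle. Consequently the state-transition matrix of the $i$-th mode is $\Phi_i(t,0)=e^{-K\lambda_i t}\phi(t,0)$, and its Floquet multipliers over one period are $e^{-K\lambda_i T}\mu$, where $\mu$ runs over the Floquet multipliers of the limit cycle itself. Invoking the standing assumption (Section~\ref{A} and the discussion after (\ref{gamma_eq})) that $x_s(t)$ is orbitally asymptotically stable, its multipliers consist of a single one equal to $1$ (the phase direction) together with others of modulus strictly less than $1$. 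Hence the Master Stability Function of the $i$-th block is $\mu_{max}(\lambda_i)=e^{-K\lambda_i T}\cdot 1=e^{-K\lambda_i T}$.

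Then the equivalence follows directly: mode $i$ is asymptotically stable $\iff \mu_{max}(\lambda_i)=e^{-K\lambda_i T}<1 \iff K\lambda_i>0 \iff K>0$, using $\lambda_i>0$. Since this must hold simultaneously for all $i=2,\dots,n$ and $\lambda_2>0$, the synchronized state is asymptotically stable if and only if $K>0$, giving both directions. For the ``only if'' direction one can also argue concretely: if $K\le 0$ then $e^{-K\lambda_i T}\ge 1$, so the phase-direction Floquet multiplier of the transverse block has modulus $\ge 1$, the corresponding transverse perturbation fails to vanish, and synchrony is lost; the case $K=0$ being just the uncoupled system. I expect the main obstacle to be the careful bookkeeping around the neutral multiplier $\mu=1$: one must argue that this direction, harmless in mode~$1$ (it is merely reparametrization of the common limit cycle), must genuinely be contracted in every transverse mode, and that it is this neutral direction---not the already-contracting ones---that supplies the binding constraint and forces exactly $K>0$ rather than a weaker threshold. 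A secondary point requiring care is justifying that connectedness of the coupling graph makes $\lambda_2>0$, so that no transverse mode is accidentally neutral for all $K$.
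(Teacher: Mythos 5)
Your proposal is correct and rests on the same key observation as the paper: for full-state linear coupling $DH=I_m$, so the coupling term $-K\lambda_i I_m$ commutes with everything and merely shifts the Floquet spectrum of each transverse mode uniformly, with the neutral phase-direction multiplier (the unit multiplier of the uncoupled limit cycle) supplying the binding constraint that forces exactly $K>0$. The only real difference is the technical device: the paper first applies the Lyapunov--Floquet transformation $P(t)$ to reduce each mode to the constant-coefficient system $\dot Z_i=(R-K\lambda_i I)Z_i$ and then argues that the eigenvalues of $R$ are shifted by $-K\lambda_i$, whereas you stay in the time-varying picture and factor the transition matrix directly as $\Phi_i(t,0)=e^{-K\lambda_i t}\phi(t,0)$, reading off the mode-$i$ multipliers as $e^{-K\lambda_i T}\mu$. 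Your route is slightly more self-contained (it avoids invoking the Lyapunov--Floquet machinery altogether) and yields the explicit formula $\mu_{\max}(\lambda_i)=e^{-K\lambda_i T}$ for the Master Stability Function, which makes both directions of the equivalence immediate; the paper's version instead phrases the same shift in terms of the eigenvalues of the constant matrix $R$. The conclusions, and the role of connectedness in guaranteeing $\lambda_2>0$, are identical in both arguments.
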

\begin{myproof}
The proof relies on the Lyapunov-Floquet transformation $P(t) \in \mathbb{R}^{m \times m}$ \cite{sinha1996liapunov} and the fact that for identical linear full-state coupling, $DH=I.$ 
Applying the transformation $P(t)$ at each eigenmode $\lambda_{i}$ (refer (\ref{gamma_eq})) results in
\begin{equation}
\dot Z_{i}=(R-KP(t)DH(x_{s})P^{-1}(t)) Z_{i},\text{ } i = 1,2,...,n,
\end{equation}
\begin{equation}\label{lya-flo}
\dot Z_{i}=(R-K\lambda_{i}I) Z_{i},\text{ } i = 1,2,...,n,
\end{equation}
where $R$ is defined in subsection \ref{A}. For $K=0$, system dynamics (\ref{gamma_eq}) become uncoupled, and the matrix $R \in \mathbb{R}^{m \times m}$ has one eigenvalue at $0$ and other $(m-1)$ eigenvalues with strictly negative real parts. 
For coupled dynamics in (\ref{lya-flo}), the eigenvalues are the roots of the polynomial,
$$\det(sI-(R-K\lambda_{i}I))=0, \Rightarrow \det((s+K\lambda_{i})I-R)=0.$$
The roots of the above characteristic equation depend on the coupling strength $K$ and the eigenmode $\lambda_{i}$. 
To prove sufficiency, first assume that $K>0$ and define $s+K\lambda_{i}=\alpha \text{ } \Rightarrow s=\alpha-K\lambda_{i}$. For $K=0, \text{ } s=\alpha$, and the eigenvalues are same as those of $R$. 
One can conclude that the set of eigenvalues of $(R-K\lambda_{i}I)$ are the same as the set of the eigenvalues of $R$ except offset by $-K\lambda_{i}$. 
All eigenvalues of $R$ corresponding to each synchronization mode ($\lambda_{i}$) have a strictly negative real part except for one eigenvalue of the first eigenmode ($\lambda_{1}=0$), which is zero.
	
To prove the necessary condition, we suppose that $K<0$. Define $s+K\lambda_{i}=\alpha \text{ }\Rightarrow s=\alpha-K\lambda_{i}$. 
We use the method of contradiction and assume that the synchronization is achieved if $K<0$. 
The matrix $R \in \mathbb{R}^{m \times m}$ has one eigenvalue $0$ and other $(m-1)$ eigenvalues with strictly negative real parts. 
However, as $K<0$, the first eigenvalue of matrix $R$, which is zero, moves to the right half of the complex plane for all eigenmode $\lambda_{i}$, ($i=2,...,n$). 
It means that at least one of the eigenvalues of ($R-K\lambda_{i}I$) corresponding to $i^{th}$ eigenmode has a positive real part. 
This contradicts the supposition. 
\end{myproof}

\begin{rem}
Using the Lyapunov-Floquet transformation, the necessary and sufficient condition on the coupling gain $K$ for the synchronization of $n$ oscillators has been obtained as $K>0$. 
This condition is both necessary and sufficient, distinguishing it from many existing approaches that primarily offer only sufficient conditions and often rely on numerical methods. 

For example, a contraction theory-based approach provides a sufficient condition for the synchronization of $n$ coupled oscillators as $K>\frac{\alpha}{n}$, where $\alpha$ is an upper bound of a matrix measure of the associated Jacobian \cite{russo2010global}. 
Similarly, a Lyapunov function-based approach yields a sufficient condition for the synchronization of $n$ coupled Van der Pol oscillators as $K>\mu/\lambda_{2}(L)$, where $\mu$ is the system parameter \cite{joshi2022synchronization}. 
While these approaches can be conservative, the result derived here offers a broader perspective with a balance of analytical rigor and practical applicability. 
This non-conservativeness makes this findings more general and robust in addressing synchronization phenomena across benchmark oscillator networks.
\end{rem}

Moreover our results  strengthen the existing results \cite{shafi2013synchronization} by providing a condition for synchronization solely based on positive coupling gain, rather than relying on threshold-based sufficient conditions that requires numerical computations.

Theorem \ref{Theorem1} is derived in the case that all states are coupled. 
When only some states are coupled (referred to as a partially coupled system). 
In general, $DH=diag[d_{1},d_{2},...,d_{n}]$ where $d_{i}\in \{0,1\}$ $i=1,2,...,n$ and not all $d_{i}=1$ simultaneously. 
The above proof doesn't work because the matrix multiplication of $P(t)$ and $DH$ is not necessarily commutative. 
However, a sufficient condition on the time evolution of the determinant of the state transition matrix is obtained using Abel-Jacobi-Liouville (AJL) identity \cite{brockett2015finite}.

\begin{thm}\label{lemma_partial}
For the network of identical oscillators (\ref{eq:(1)}) coupled linearly in a partial state fashion, $K>0$ implies $ \lim\limits_{t \to \infty}\operatorname{det} \phi(t,t_{0})=0$, where $\phi(t,t_{0})$ is the state transition matrix of (\ref{gamma_eq}).
\end{thm}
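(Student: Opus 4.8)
The plan is to apply the Abel--Jacobi--Liouville (AJL) identity to each block $\dot{\zeta}_i = [DF(x_s) - K\lambda_i DH(x_s)]\zeta_i$ of \eqref{gamma_eq}. Recall that AJL gives $\det\phi_i(t,t_0) = \exp\!\left(\int_{t_0}^{t}\operatorname{tr}\!\big(DF(x_s(\tau)) - K\lambda_i DH(x_s(\tau))\big)\,d\tau\right)$. Since $DH = \operatorname{diag}[d_1,\dots,d_m]$ with $d_i\in\{0,1\}$ and not all zero, $\operatorname{tr}(DH) = p$ for some integer $p$ with $1 \le p \le m$, and this is constant in $\tau$. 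Thus the exponent splits as $\int_{t_0}^{t}\operatorname{tr} DF(x_s(\tau))\,d\tau - K\lambda_i p\,(t-t_0)$.

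The next step is to control the first term, $\int_{t_0}^{t}\operatorname{tr} DF(x_s(\tau))\,d\tau$. The key observation is that for the uncoupled linearization around the limit cycle (the $K=0$ or first-eigenmode case), Floquet theory from Subsection~\ref{A} tells us the Floquet multipliers are $1, \mu_2,\dots,\mu_m$ with $|\mu_j|<1$ for $j\ge 2$. Since $\prod_j \mu_j = \det\phi(T,0) = \exp\!\left(\int_0^T \operatorname{tr} DF(x_s(\tau))\,d\tau\right)$, we get $\int_0^T \operatorname{tr} DF(x_s(\tau))\,d\tau = \log\prod_{j\ge 2}|\mu_j| =: -c$ with $c>0$ (strictly, since each $|\mu_j|<1$). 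Writing $t - t_0 = NT + r$ with $0\le r < T$ and using $T$-periodicity of $x_s$, the integral over $[t_0,t]$ equals $-cN$ plus a bounded remainder. Hence $\int_{t_0}^{t}\operatorname{tr} DF(x_s(\tau))\,d\tau \to -\infty$ linearly, and adding $-K\lambda_i p (t-t_0)$ with $K>0$, $\lambda_i>0$, $p\ge 1$ only makes it more negative. Therefore each $\det\phi_i(t,t_0)\to 0$, and since $\det\phi(t,t_0) = \prod_{i=1}^{n}\det\phi_i(t,t_0)$ under the block-diagonal transformation $\zeta=(Q\otimes I_m)Y$ (the determinant being invariant under the similarity), the full determinant tends to zero.

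I expect the main obstacle to be handling the first eigenmode cleanly: for $i=1$ we have $\lambda_1=0$, so $\det\phi_1(t,t_0) = \exp\!\big(\int_{t_0}^t \operatorname{tr} DF(x_s(\tau))\,d\tau\big)$, which still $\to 0$ by the argument above (one Floquet multiplier is exactly $1$ but the product of the rest is strictly less than $1$ in modulus). So the conclusion $\det\phi(t,t_0)\to 0$ actually holds even with the $\lambda_1=0$ block included, and the positivity of $K$ is what guarantees the remaining blocks do not spoil this. A secondary technical point is justifying that the product formula $\det\phi = \prod_i \det\phi_i$ is legitimate — this follows because $Q\otimes I_m$ is a constant invertible matrix, so it does not affect whether the determinant of the transition matrix goes to zero, and in the transformed coordinates the system is genuinely block-diagonal. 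I would also remark that this only shows volume contraction in state space, a necessary but not sufficient condition for synchronization, which is why the statement is phrased as an implication rather than an equivalence.
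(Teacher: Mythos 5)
Your proof follows essentially the same route as the paper: apply the Abel--Jacobi--Liouville identity to each eigenmode block of (\ref{gamma_eq}) and split the exponent into the uncoupled part $\int_{t_0}^{t}\operatorname{tr} DF(x_s(\tau))\,d\tau$ plus the coupling contribution $-K\lambda_i\operatorname{tr}(DH)(t-t_0)$. Your version is in fact more complete than the paper's on two points the paper glosses over: you keep the factor $\operatorname{tr}(DH)=p\ge 1$ (the paper silently writes $-K\lambda_i(t-t_0)$), and you justify via the Floquet-multiplier product $\prod_j\mu_j=\exp\bigl(\int_0^T\operatorname{tr} DF\,d\tau\bigr)$ that the uncoupled factor itself decays, whereas the paper simply asserts that $K>0$ "ensures convergence" without controlling that first term.
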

\begin{myproof}
The AJL identity for an uncoupled system (\ref{linearize}) is 
\begin{equation}
\operatorname{det} \phi\left(t,t_{0}\right)=\exp\left[ \int_{t_{0}}^{ t}{ \operatorname{tr} (DF(x_{s}(\tau))d\tau)}\right],
\end{equation}
where $\operatorname{det} \phi\left(t,t_{0}\right)$ is the state transition matrix.
The determinant of the transition matrix can be interpreted as a measure of the volume in the phase space. 
For the linearized partial state coupled system (\ref{gamma_eq}), the state transition matrix corresponding to each eigenmode follows
\begin{equation*}\label{STM_For}
\operatorname{det} \phi\left(t,t_{0}\right)=\exp\left[ \int_{t_{0}}^{t}{ \operatorname{tr} (DF(x_{s}(\tau))-K \lambda_{i}DH(x_{s}))d\tau}\right], 
\end{equation*}
\begin{equation}\label{state_tra_sim}
=\exp\left[ \int_{t_{0}}^{t}{(\operatorname{tr} (DF(x_{s}(\tau)))d\tau)}\right]\exp(-K \lambda_{i}(t-t_{0})).
\end{equation}
For positive coupling gain \( K > 0 \), equation (\ref{state_tra_sim}) shows that the determinant of the state transition matrix decays exponentially over time. 
As a result, \( \det \phi(t, t_0) \rightarrow 0 \) as \( t \rightarrow \infty \). 
This implies that the volume in the phase space shrinks, indicating that system trajectories converge, which supports synchronization. 
\end{myproof}

\begin{figure}[!h]
\centering
\includegraphics[width=8cm, height=4cm]{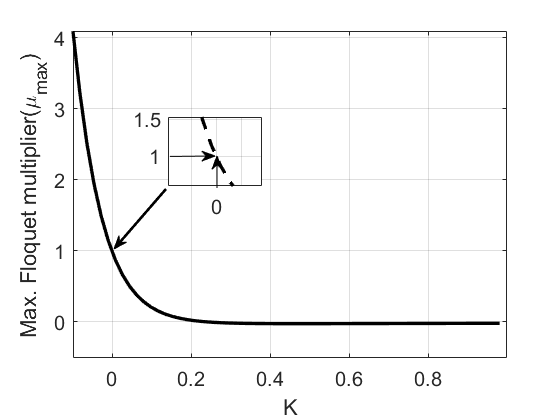}
\caption{Numerical computation of the Master Stability Function for a coupled Van der Pol oscillator shows that the maximum Floquet multiplier decreases as the coupling strength ($K$) increases.}
\label{fig:MSF_Plot}
\end{figure}

We further used a numerical approach to study the synchronization behavior of a second-order partial-state coupled system (\ref{gamma_eq}). 
For this purpose, the Master Stability Function was calculated numerically \cite{pecora1998master}. 
We found the maximum Floquet multiplier is a decreasing function of the coupling strength ($K$). 
This has been shown in Fig. \ref{fig:MSF_Plot}. 
As the maximum Floquet multiplier is less than one for $K>0$, the computation suggests synchronization is achieved.
\section{NON-IDENTICAL COUPLING}
When the oscillators are coupled identically, a single scalar constant $K$ is sufficient to characterize the coupling strength. However, when the oscillators are coupled non-identically, then different coupling constants $k_{ij}$ are needed and the above theoretical framework does not suffice.
We considered the oscillators as defined in (\ref{uncoupled}) coupled diffusively over a fully connected network, where \( k_{ij} \) represents the effective coupling strength between oscillator \( i \) and oscillator \( j \).
In a fully connected network, \(k_{ij}\) also accounts for the network topology, effectively absorbing the Laplacian structure.
To analyze the stability of the coupled system, we linearize the dynamics around the synchronous state \( x_s \), where \( Df(x_s) \) represents the Jacobian of the internal oscillator dynamics and \( DH(x_s) \) represents the Jacobian of the interaction function.
The resulting linearized system is given by 
\begin{equation}\label{coupled_directed_reduced} 
\delta \dot{X} = (I_n \otimes Df(x_s)) \delta X + (\mathbf{K} \otimes DH(x_s)) \delta X, 
\end{equation} 
where \( \delta x_i = x_i - x_s \) represents the deviation of oscillator \( i \) from the synchronous state.
We express the system in matrix form by defining \( X = [x_1, x_2, \ldots, x_n]^T \) as the \( nm \)-dimensional state vector, with \( X_s = [x_s, x_s, \ldots, x_s]^T \) representing the synchronized state, and \( \delta X = [\delta x_1, \delta x_2, \ldots, \delta x_n]^T \) representing the vector of deviations of all oscillators from synchronization.
Here, \( \otimes \) denotes the Kronecker product and \( I_n \) is the identity matrix of size \( n \).
The matrix \( \mathbf{K} \) can be interpreted as the Laplacian-like coupling matrix, which encapsulates the coupling strengths \( k_{ij} \) for all pairs of oscillators in the network. It is defined as 
$\mathbf{K} = [k_{ij}]$, where $k_{ij}$ represents the coupling strength from node $j$ to node $i$ for $ i \ne j$.
The diagonal entries are chosen such that the row sums vanish, $k_{ii} = -\sum_{j \ne i} k_{ij}$.
This structure ensures the total incoming coupling strength to each node is balanced by its diagonal entry, a common feature in the analysis of synchronization dynamics. For example, in the matrix ($\mathbf{K}$), the off-diagonal terms correspond to the pairwise coupling strengths, while each diagonal element is defined such that the sum of each row is zero, i.e., \( \sum_j k_{ij} = 0 \) for each \( i \). This form guarantees that the synchronous solution \( x_1 = x_2 = \cdots = x_n = x_s \) remains invariant under the dynamics. 

We simplified the linearized system using properties of the Kronecker product. Consider the Kronecker product of the coupling matrix \( \mathbf{K} \) and the Jacobian matrix \( DH(x_s) \). To facilitate diagonalization, we introduced a transformation \( \delta X = T \delta \tilde{X} \) with \( T = V \otimes I_m \), where \( V \) diagonalizes \( \mathbf{K} \). With this transformation, the system becomes 

\[ 
T \delta \dot{\tilde{X}} = \left(I_n \otimes Df(x_s) - \mathbf{K} \otimes DH(x_s)\right) T \delta \tilde{X}. 
\] 

Premultiplying both sides by \( T^{-1} \) yields 

\[ 
\delta \dot{\tilde{X}} = T^{-1} \left(I_n \otimes Df(x_s) - \mathbf{K} \otimes DH(x_s)\right) T \delta \tilde{X}. 
\] 

Using the Kronecker product property \( (A \otimes B)(C \otimes D)(E \otimes F) = (ACE) \otimes (BDF) \), the system simplifies to 

\[ 
\delta \dot{\tilde{X}} = \left(I_n \otimes Df(x_s) - Q \otimes DH(x_s)\right) \delta \tilde{X}, 
\] 

where \( Q = V^{-1} \mathbf{K} V \) is a diagonal matrix containing the eigenvalues \( \lambda_i \) of \( \mathbf{K} \). The decoupled system is 
\begin{equation}\label{coupled_directed_final_reduced} 
\delta \dot{\tilde{X}}_i = (Df(x_s) - \lambda_i DH(x_s)) \delta \tilde{X}_i, \quad i = 1, \ldots, n, 
\end{equation} 

where \( \delta \tilde{X}_i \) denotes the transformed perturbation for the \( i \)-th eigenmode. In this framework, synchronization analysis is reduced to verifying the stability of each decoupled mode. As established in Theorem~\ref{3_node}, synchronization is achieved if one eigenvalue of \( \mathbf{K} \) is zero (corresponding to the synchronization manifold), and all remaining eigenvalues have positive real parts. 

\begin{thm}\label{3_node}
A network of identical oscillators coupled non-identically and linearly over a directed graph in a full-state fashion \ref{eq:(1)} synchronizes if one of the eigenvalues of the coupling gain matrix ($\mathbf{K}$) is zero and the remaining have a positive real part.
\end{thm}

\begin{myproof}
The proof relies on the Lyapunov-Floquet transformation $P(t) \in \mathbb{R}^{m \times m}$ (\ref{A}). 
Applying the transformation $P(t)$ at each eigenmode $\lambda_{i}$ (refer (\ref{coupled_directed_final_reduced})) results in
\begin{equation}
\dot Z_{i}=(R-P(t) \eta P^{-1}(t)) Z_{i},\text{ } i = 1,\ldots,n \text{ \& } \eta=\lambda_{i}DH(x_{s}),
\end{equation}
\begin{equation}\label{lya-floq}
\dot Z_{i}=(R-\lambda_{i}I) Z_{i},\text{ } i = 1,2,...,n,
\end{equation}
When coupling strength $\mathbf{K}=0$, system dynamics are uncoupled, and the matrix $R \in \mathbb{R}^{m \times m}$ has one eigenvalue at $0$ and other $(m-1)$ eigenvalues with strictly negative real parts. 
For coupled dynamics ($K_{1},K_{2}, \ldots, K_{n} \neq 0$) in (\ref{lya-floq}), the eigenvalues are the roots of the polynomial,
$$\det(sI-(R-\lambda_{i}I))=0,$$
$$ \Rightarrow \det((s+\lambda_{i})IR)=0.$$
The roots of the above characteristic equation are contingent upon the coupling strength and the corresponding eigenmode $\lambda_{i}$. 
Define a new variable $\alpha = s + \lambda_{i}$. 
When $\mathbf{K} = 0$, $s = \alpha$, and the eigenvalues are identical to those of matrix $R$. 
Consequently, the eigenvalues of $(R-\lambda_{i}I)$ are equivalent to those of $R$, but with an offset of $(-\lambda_{i})$. 
All eigenvalues of $R$ associated with each synchronization mode possess negative real parts, except for the first eigenmode $(\lambda_{1}=0)$, which is zero. 
\end{myproof}

\begin{rem} 
The sufficient condition derived above using the Lyapunov-Floquet transformation ensures that the synchronized state is locally exponentially stable. Specifically, for non-identical but positive coupling gains $\mathbf{K}$, the shifted eigenvalues $\left(R-\lambda_i I\right)$ have negative real parts for all transverse modes $\left(\lambda_i \neq 0\right)$, which guarantees convergence towards the synchronization manifold. Thus, the synchronized state is stable under the above-stated sufficient condition, but the necessary condition may not hold because the coupling gain matrix is not symmetric.
\end{rem}

\begin{rem}
To establish the validity of Theorem~\ref{3_node}, the following assumptions and conditions are essential:
\begin{enumerate}
    \item The coupled network must be fully connected over a directed graph. The coupling structure is represented by  \(k_{ij}\), this is a local coupling gain between node \( j \) and node \( i \), and should not be confused with the network coupling matrix \( \mathbf{K} \in \mathbb{R}^{n \times n} \) used in the compact representation of the full system. Coupling gain \( \mathbf{K}\) can be interpreted as a Laplacian matrix.
    
    \item The eigenvalue structure of the coupling matrix \( \mathbf{K} \) must satisfy the condition that one eigenvalue is zero, while all remaining eigenvalues possess strictly positive real parts. This ensures that the synchronization manifold is invariant under the system dynamics and locally exponentially stable.
    
    \item The applicability of this theorem extends beyond idealized settings and holds in generalized graph topologies. In particular, when all coupling constants $k_{i j}>0$, the Laplacian-like structure of the coupling matrix $\mathbf{K}$, together with Gershgorin's Theorem, ensures that $\mathbf{K}$ has one eigenvalue at zero and the remaining with strictly positive real parts. This spectral property satisfies the sufficient condition for synchronization, thereby demonstrating the practical relevance of the result.
    \end{enumerate}
\end{rem}
We illustrated the above condition through a general network example.
 \vspace{-0.35cm}
\begin{figure}[h]
\centering
\includegraphics[width=0.35\textwidth]{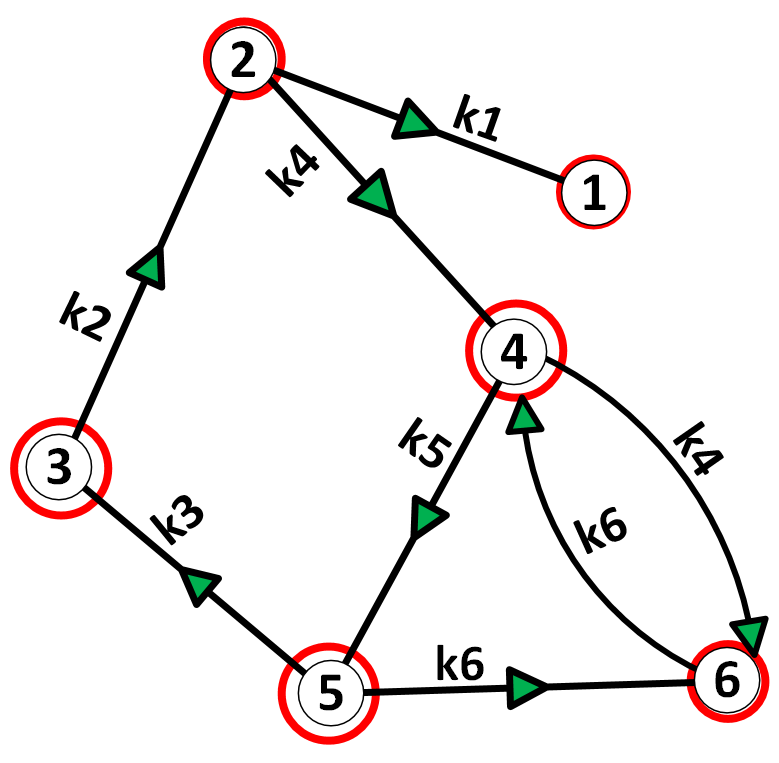}
\caption{Six node directed graph with coupling gain $\mathbf{k}_{ij}$ as follows $K_1=\mathbf{k}_{21},$ $K_2=\mathbf{k}_{32},$ $K_3=\mathbf{k}_{53},$ $K_4=\mathbf{k}_{24},$ $K_5=\mathbf{k}_{45},$ and $K_6=\mathbf{k}_{56}$}
\label{hetro_K_gen}
\end{figure}

\subsection*{Generalized Graph Network}
Consider a fully connected six-node graph is chosen to study the synchronization of non-identical coupled oscillators.
This study explores the synchronization conditions for a system of identical oscillators coupled with varying positive weights over a directed graph Fig. \ref{hetro_K_gen}.
Such graph studies help to understand a more realistic and complex topology often observed in real-world systems. 

Figure \ref{hetro_K_gen} illustrates such a network, where six nodes are interconnected through directed edges. \( \mathbf{K} \) is the coupling matrix

{\small
\addtolength{\arraycolsep}{-3pt} 
\begin{equation} \label{eq1}
\begin{split}
\mathbf{K}=\left[\begin{array}{cccccc}
K_{1} & -K_{1} & 0 & 0 & 0 & 0 \\
0 & K_{2} & -K_{2} & 0 & 0 & 0 \\
0 & 0 & K_{3} & 0 & -K_{3} & 0 \\
0 & -K_{4} & 0 & 2K_{4} & 0 & -K_{4} \\
0 & 0 & 0 & -K_{5} & K_{5} & 0 \\
0 & 0 & 0 & -K_{6} & -K_{6} & 2K_{6} \\
\end{array}\right].
\end{split}
\end{equation}
}

In this six node directed graph structure, each block in the coupling matrix corresponds to $\mathbf{k}_{ij}$ as follows $K_1=\mathbf{k}_{21},$ $K_2=\mathbf{k}_{32},$ $K_3=\mathbf{k}_{53},$ $K_4=\mathbf{k}_{24},$ $K_5=\mathbf{k}_{45},$ and $K_6=\mathbf{k}_{56}$.
We examined conditions, the matrix $\mathbf{K}$ for it to have eigenvalues that satisfy the condition given in Theorem \ref{3_node}.
For this verification, the Gershgorin circle theorem approaches have been adopted and used to analyze the eigenvalue distribution of the $\mathbf{K}$ matrix (\ref{eq1}).
This analysis showed that all eigenvalues lie within their corresponding Gershgorin disks and touch the origin, indicating that the eigenvalues lie in the right-half plane, with no repeated eigenvalues.
Notably, since the underlying graph is connected and the coupling matrix exhibits a Laplacian-like structure, it follows from spectral graph theory that the eigenvalue at zero is simple, and its associated eigenspace is one-dimensional.
This eigenspace is spanned by the constant vector \( [1, 1, \ldots, 1]^T \), which represents the synchronization manifold where all oscillators evolve identically. This confirms the applicability and validity of Theorem~\ref{3_node}.
\begin{figure}[!h]
\centering
\includegraphics[width=0.5\textwidth]{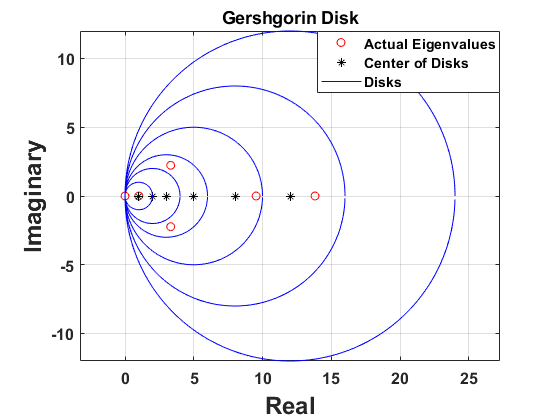}
\caption{Existence of positive real non-repeating roots for nonidentical coupling strength $K_{1}=1, K_{2}=2, K_{3}=3, K_{4}=4, K_{5}=5$ and $K_{6}=6$.}
 \label{Gresh_6node}
\end{figure}

Further analysis of the eigenvalue distribution (Fig. \ref{Gresh_6node}) revealed important insights into the behavior of the system. 
It is also worth noting that none of the eigenvalues are repeated, which implies that the system does not exhibit any degeneracy in its eigenvalue spectrum.
These observations lead to the important conclusion, synchronization of the coupled system can be achieved if the $K>0$ conditions (Theorem \ref{3_node}) are satisfied.
In other words, the system’s ability to synchronize is guaranteed when the eigenvalues of the coupling matrix lie in the right-half complex plane, with one eigenvalue at zero and the rest having strictly positive real parts. This spectral property arises, for example, when $\mathbf{K > 0}$, as the Gershgorin disks are positioned such that they touch the origin and lie entirely within the right-half plane. Thus, this structure provides a sufficient condition for synchronization.

\textbf{Numerical simulation:} 
We numerically simulated the six-node directed generalized graph \ref{hetro_K_gen} for coupled Van der Pol oscillators to further illustrated the dynamics of complete synchronization under nonidentical coupling. 
Initially, for $t \in [0, 15]$ sec, the oscillators evolve independently, exhibiting distinct trajectories due to their unique initial conditions. 
At $t = 15$ sec, the coupling is activated, leading to a transient phase where the oscillators adjust to the directed interactions imposed by the graph topology.
 \vspace{-0.5cm}
\begin{figure}[h!]
\centering
\includegraphics[width=0.5\textwidth]{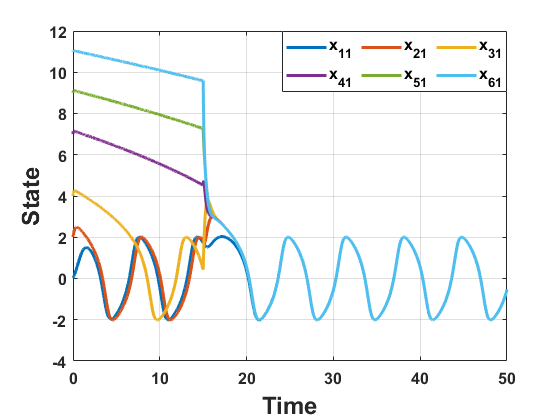}
\caption{Numerical simulation of coupled Van der Pol oscillators over a directed graph for $\mu=1$. Three node case (initial condition: $(x_{10}=[1,2],x_{20}=[3,4],x_{30}=[5,6],x_{40}=[7,8]$ $x_{50}=[9,10], x_{60}=[11,12]$ ) and gain ($K_{1}=1, K_{2}=2, K_{3}=3, K_{4}=4, K_{5}=5$ and $K_{6}=6$).}
 \label{6_node_genralized_ode}
\end{figure}

Following the activation of coupling, the system transitions into a state of complete synchronization. 
All oscillators converge to a common periodic trajectory, irrespective of their initial states or positions within the directed network. 
This result provides a sufficient condition for nonidentical coupling and directed topologies to induce synchronized behavior in networked nonlinear oscillators, provided that the coupling strength and network structure are appropriately configured.
Analyzing this generalized graph offers valuable insights into synchronization phenomena in networks with diverse connection patterns and non-uniform interaction strengths. 
The observed synchronization highlights the influence of directed coupling in achieving coherent dynamics in complex systems.

\section{Experimental Result: Partial-state coupling}
To complement the numerical simulations, an electronic testbed was designed to investigate the synchronization behavior of nonlinear oscillators.

\begin{table}[!h]
\caption{Electronic components in the experiment}
\centering
		\begin{tabular}{ | l | l | l | p{1.5cm} |}
			\hline
			Symbol & Parameter  & Value & Units  \\ \hline 
			$R_i $ & resistor & $1.0\pm5\%$ & $K \Omega$ \\ \hline
			$R_j $ & resistor & $1.0\pm5\%$ & $M \Omega$ \\ \hline
			$C_i $ & capacitor & $1.0\pm10\%$ & $ \mu F$ \\ \hline
			op-amp  & UA741CN  &  &  \\ \hline
			Potentiometer & variable resistor & $100.0\pm5\%$ & $K \Omega$ \\ \hline
			$V_1$ and $V_2$ & voltage source  & $\pm 12$ & V  \\ \hline
			Analog multiplier & AD633JNZ & &\\ \hline
		\end{tabular}
        \label{Table_1}
\end{table}

\textit{Example: Van der Pol oscillator.} In this section, we extend our investigation into the synchronization dynamics of coupled VPOs by considering a system comprising three oscillators interconnected via partial-state coupling and follows the VPO electronic circuit. The theoretical results are experimentally validated using a breadboard implementation of three partially coupled VPOs with distinct parameters (Table~1). The observed synchronization behavior aligns with numerical simulations, confirming the practical feasibility of the proposed model.

\begin{figure}[ht]
\centering
\includegraphics[width=0.8\textwidth]{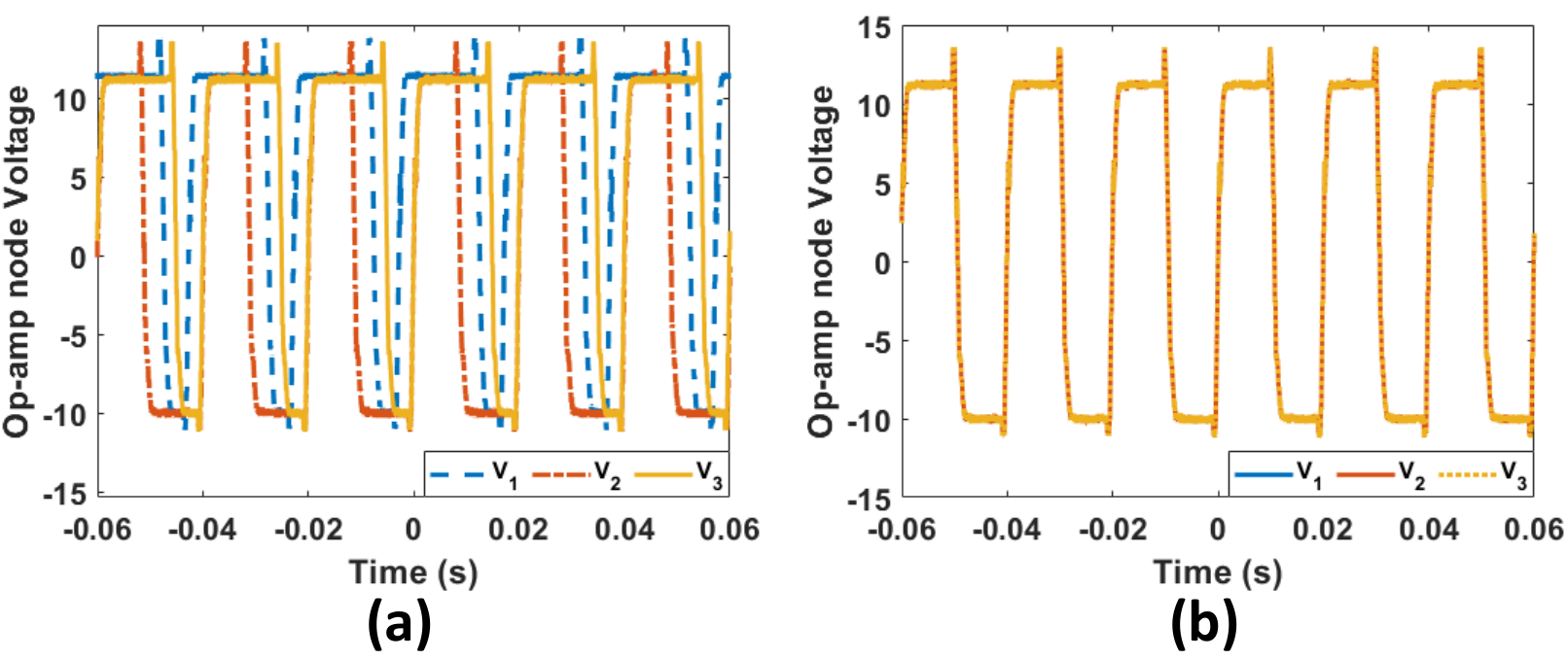}
\caption{Synchronization of three partial-state coupled VPOs. a) Unsynchronized result b) Synchronized result.}
\label{3VPO_1state_cou_exp_plot}
\end{figure}

Oscilloscope traces in Fig.~\ref{3VPO_1state_cou_exp_plot} confirm synchronization among three partially coupled VPOs with distinct parameters (Table~\ref{Table_1}). Initially unsynchronized (Fig.~\ref{3VPO_1state_cou_exp_plot}a), the oscillators achieve unified frequency and phase after coupling (Fig.~\ref{3VPO_1state_cou_exp_plot}b), validating the theoretical predictions.

\section{Conclusion}
Existing synchronization conditions are largely sufficient and conservative, despite numerical evidence suggesting tighter thresholds. Using Floquet theory, we prove that a positive coupling constant ($K>0$) is necessary and sufficient for local synchronization of identical oscillators under full-state linear coupling. For partial-state coupling, positive coupling ensures asymptotic state-space volume contraction, implying local synchronization for two-dimensional oscillators. We further show that positive coupling gains remain sufficient for local synchronization under non-identical coupling over directed graphs. Numerical simulations and experiments with coupled electronic oscillators validate the results. These findings provide a sharper characterization of local synchronization and reduce conservatism in existing theory. Future work will address heterogeneous oscillators, higher-dimensional partial coupling, and switching directed topologies, with applications in power systems, multi-agent coordination, neuromorphic networks, and distributed clock synchronization.

\bibliographystyle{ieeetr}
\bibliography{reference}

\end{document}